\begin{document}
\newcommand{\f}[2]{#1 &$\pm$ #2}
\newcommand{\feh}[1]{\left(\frac{\Delta #1}{#1}\right)^2}
\newcommand{\fehleri}[2]{\Delta #1 = #1 \frac{\Delta #2}{#2}}
\newcommand{\fehlerii}[3]{\Delta #1 = #1 \sqrt{\feh{#2}+\feh{#3}}}
\newcommand{\fehleriv}[5]{\Delta #1 = #1 \sqrt{\feh{#2}+\feh{#3}+\feh{#4}+\feh{#5}}}
\newcommand{\folge}{ \quad \longrightarrow \quad }
\newcommand{\Folge}{ \quad \Rightarrow \quad }
\newcommand{\degr}{^{\circ}}
\newcommand{\cel}{\,\degr C}
\newcommand{\pabl}[2]{\frac{\partial #1}{\partial #2}}
\newcommand{\pms}{\,\pm\,}
\newcommand{\pmss}{\;\pm\;}
\newcommand{\ket}[1]{\left| #1 \right>}
\newcommand{\bra}[1]{\left< #1 \right|}
\newcommand{\braket}[2]{\left< #1 | #2 \right>}

\newcommand{\bild}[3]{\begin{figure}[h]
\center
\includegraphics[angle=0,width=#2cm]{../bilder/#1.eps}
\caption{#3}
\label{#1}
\end{figure}}

\newcommand{\diag}[3]{\begin{figure}[h]
\center
\includegraphics[angle=-90,width=#2cm]{../daten/#1.eps}
\caption{#3}
\label{#1}
\end{figure}}

\newcommand{\beqa}{\begin{eqnarray*}}
\newcommand{\eeqa}{\end{eqnarray*}}
\newcommand{\beqan}{\begin{eqnarray}}
\newcommand{\eeqan}{\end{eqnarray}}
\newcommand{\beq}{\begin{equation}}
\newcommand{\eeq}{\end{equation}}
\newcommand{\id}{1\!\! 1}
\newcommand{\is}{\!\!\!&=&\!\!\!}
\newcommand{\bsp}{\!\!\!\!\!\!}
\newcommand{\eps}{\varepsilon}
\newcommand{\rar}{\rightarrow}
\newcommand{\Rar}{\Rightarrow}
\newcommand{\lrar}{\longrightarrow}
\newcommand{\uli}{\underline}
\newcommand{\oli}{\overline}
\newcommand{\equi}{ \quad \Leftrightarrow \quad }
\newcommand{\tdot}[1]{\dot{\!\ddot{#1}}}
\newtheorem{Def}{Definition}[section]
\newtheorem{Satz}{Theorem}[section]
\newtheorem{Lemma}{Lemma}[section]
\newtheorem{Proposition}{Proposition}[section]

\title{Dynamical Backreaction in Robertson-Walker Spacetime}
\author{
Benjamin Eltzner, Hanno Gottschalk}
\date{February 2011}

\maketitle

\begin{abstract}
The treatment of a quantized field in a curved spacetime requires the study of backreaction of the field on the spacetime via the semiclassical Einstein equation. We consider a free scalar field in spatially flat Robertson-Walker space time. We require the state of the field to allow for a renormalized semiclassical stress tensor. We calculate the sigularities of the stress tensor restricted to equal times in agreement with the usual renormalization prescription for Hadamard states to perform an explicit renormalization. The dynamical system for the Robertson Walker scale parameter $a(t)$ coupled to the scalar field is finally derived for the case of conformal and also general coupling.  
\end{abstract}

\section{Introduction}
The studies of quantized fields in curved spacetimes usually assume a fixed background spacetime on which a quantized field is defined \cite{Ful}. Such a setting has also been used to investigate the simplest cosmological model of a homogeneous isotropic universe. In this context the mode spectrum of a scalar field in a homogeneous isotropic universe undergoing an era of inflation has been found to be the famous and experimentally confirmed scale free Harrison-Zeldovich spectrum \cite{HZ}. 

The work which is concerned with the backreaction of the field on the evolution of the spacetime itself usually takes a mean field approach in which quantum fluctuations only contribute to the effective potential of the (classical) expectation value of the field \cite{Lin,Weinb}. The dynamics of the quantum degrees of freedom is not considered. In other approaches, the quantum degrees of freedom either decouple from the spacetime \cite{Star} due to conformal coupling for a massless field, or a large mass of the field is assumed \cite{DFP}, which implies another semiclassical approximation. This approximation in fact implies that the field configuration is dominated by renormalization ambiguities. Again the coupling of quantum degrees of freedom to the geometry of the space time is only approximate.  

In this work, which is partially based on \cite{Elz}, we determine the coupling of a free quantum field to the scale parameter of the spatially flat Robertson-Walker spacetime. The result is an equation for a dynamical system with infinitely many degrees of freedom that can, at least in principle, be solved. 

The dynamical system is derived via an expansion of Riemann normal coordinates in the Robertson-Walker canonical coordinates up to third order \cite{Gack}. This allows to calculate the singular terms of the Hadamard bidistribution and its first and second time derivative restricted to equal time surfaces up to constant terms (zero mode terms) of the second time derivative. Those terms are needed to renormalize the energy-momentum tensor of the free field restricted to equal time surfaces. The latter establishes the dynamics of the Robertson-Walker scale factor (up to renormalization ambiguities).

While finalizing this paper we got aware of the publication \cite{Pina}, which derives exactly the same equation of motion as we do for the case of conformal coupling using a completely independent argument. In fact, this work in the conformally coupled case also establishes the existence of solutions for small time intervals given that the initial state fulfills certain conditions, which goes beyond the scope of this article.  

\section{General description of dynamics and renormalization approach}
Throughout the article we restrict to flat Robertson-Walker (RW) spacetimes with space-time dimension 4. This allows to use the standard Fourier transform on spatial sections of constant RW time in order to formulate the dynamics of the field in terms of modes. As the stress tensor is formulated as a differential operator acting on the two-point function, we will formulate the dynamics of the field in such a way that the two-point function of the field and its associate momentum field operator restricted to the given time is the dynamical variable.  We will also restrict our attention to homogeneous, isotropic, quasifree and pure states that either are Hadamard states or are sufficiently close to Hadamard in the sense that they allow the same renormalization prescription for their energy-momentum tensor. Hadamard states and adiabatic vacuum states on RW space time have been studied e.g. by Lüders and Roberts \cite{LR}, Juncker and Schrohe \cite{JS} and more recently by Olbermann \cite{Olb}.  

The line element of a spatially flat Robertson-Walker metric for a homogeneous, isotropic spacetime is
\beq
ds^2 = dt^2 - a^2(t) d \vec{x}^2
\eeq
where we call $a(t)$ the scale parameter and define the Hubble parameter $H(t) = \frac{\dot{a}}{a}$. We consider a scalar free field of mass $m$ coupled to the scalar curvature $R(t)= - 6 \left(\dot{H}(t) + 2 H^2(t) \right)$ with coupling $\xi$,
\beq
(\Box-\xi R+m^2)\phi=0,
\eeq
where $\Box$ is the D'Alembertian.

We use the abbreviation $\omega_k^2 = a^{-2} k^2  + m^2 - \xi R$. The Klein-Gordon equation for the field modes $\phi_k$ is given by
\beq
\ddot{\phi}_k + 3 H \dot{\phi}_k + \omega_k^2 \phi_k = 0
\eeq
Using the canonical momentum field $\pi_k = a^3 \dot{\phi}_k$ we consider the Hamiltonian form of this equation
\beq
\partial_t \left( \begin{array}{c} \phi_k\\ \pi_k \end{array} \right) = \left( \begin{array}{cc} 0	&	a^{-3}\\	-a^{3} \omega_k^2	&	0 \end{array} \right) \left( \begin{array}{c} \phi_k\\ \pi_k \end{array} \right)
\label{mKGG1}\eeq

It is shown in \cite{LR} that the equal time two-point function, i.e. the two-point function on a Cauchy surface, of a state can be described by matrices 
\beq
\left( \begin{array}{cc} G_{\phi\phi,k} & G_{\phi\pi,k}\\ G_{\pi\phi,k} & G_{\pi\pi,k} \end{array} \right),
\label{Statedef}
\eeq
where $G_{\phi\phi,k}=G_{\phi\phi}(t,k)$ is the Fourier transform in $\vec z=\vec x-\vec y$ of two equal time field operators $G(t,\vec x,\vec y)=\omega(\phi(t,\vec x)\phi(t,\vec y))$, $G_{\phi\pi,k}$ the Fourier transform of one equal time field operator and one canonically conjugated momentum operator etc., and $k$ is the modulus of the momentum conjugated to $\vec z$. Our normalization convention for the Fourier transform is $\mathscr{F}(f)(\vec k)=\int_{\mathbb{R}^3} f(\vec x) e^{i\vec k\cdot \vec x}dx$. The positivity of the state will enforce that the matrix is positive semidefinite with its determinant vanishing for a pure quasifree state.  

The symmetric part of the two-point function modes then fulfills the linear system of equations
\beq
\label{modeMotion}
\partial_t \left( \begin{array}{c} G_{\phi\phi,k}\\ G_{(\phi\pi),k}\\ G_{\pi\pi,k} \end{array} \right) = \left( \begin{array}{ccc} 0	&	2 a^{-3}	&	0\\	-a^{3} \omega_k^2	&	0 & a^{-3}\\	0	&	-2 a^{3} \omega_k^2	&	0 \end{array} \right) \left( \begin{array}{c} G_{\phi\phi,k}\\ G_{(\phi\pi),k}\\ G_{\pi\pi,k} \end{array} \right)\eeq

Here and in the following $()$ stands for symmetrization and $[\,]$ for anti symmetrization in the field and momentum operator. This system of equations has one conserved quantity per mode
\beq
J_k = G_{\phi\phi,k} G_{\pi\pi,k} - G^2_{(\phi\pi),k}
\eeq
which reduces the number of degrees of freedom per mode to two, as required.

The condition for a state to be pure and to induce a representation of the canonical commutation (CCR) algebra then is 
\beq
\forall k: \; J_k = - G^2_{[\phi\pi],k} = \frac{1}{4}
\label{sCCR}\eeq
which implies the vanishing of the above-mentioned determinant for all modes.

Next we turn to the semi classical Einstein equation. This equation has the form
\beq
\label{semiclassicEEQ}
G_{\mu\nu} = 8 \pi G \left< T_{\mu\nu} \right>_{\omega}
\eeq
where the expectation value of the stress tensor must be renormalized. In this equation $G_{\mu\nu}$ is the Einstein tensor and $G$ is the gravitational constant and both should not be confused with $G_{\phi\phi,k}$ etc. which are two-point functions.

We will apply the point-splitting procedure as formulated in \cite{Mor}, see also \cite{WalB}.  In order to preserve general covariance of the renormalization prescription, we need to subtract the Hadamard bidistribution $\mathfrak{H}$, described in some detail in \cite{Ful}, from the two-point function before removing the point splitting. Actually, a sufficiently precise approximation $\mathfrak{H}_n$ of the Hadamard parametrix does the job as well
\beq
\left< T_{\mu\nu}^{(\eta)} (v) \right>_{\omega,\lambda,\xi} = \lim_{(x,y)\to (v,v)}D^{(\eta)}_{(v)\mu\nu}(x,y) \left[\left< \phi(x) \phi(y) \right>_{\omega}-\mathfrak{H}_n(x,y)\right]+t_{\mu\nu}(v).
\label{EMTensor}
\eeq
with , cf \cite{Ha,WalB},
\beq
\label{reno}
t_{\mu\nu} (v)= \frac{\delta}{\delta g^{\mu\nu}(v)} \int  \left(Am^4+Bm^2R+C R^2 + D R_{\alpha\beta} R^{\alpha\beta} \right)d_gx 
\eeq
where $A,B,C$ and $D$ are real valued renormalization degrees of freedom and $d_gx$ stands for the canonical volume form $\sqrt{|g|}dx$.

$D^{(\eta)}_{(v)\mu\nu}(x,y)$ is the symmetrization in $x$ and $y$ of the following second order partial differential operator (cf. \cite{Mor} Eq. (10) for the details, where we corrected some minor misprints) 
\beqan &\partial'_{x,\mu}\partial'_{y,\nu}-\frac{1}{2}g_{\mu\nu}(g^{\gamma,\delta}\partial'_{x,\gamma}\partial'_{y,\delta}-m^2) +\xi\left[(R_{\mu\nu}-\frac{1}{2} g_{\mu\nu}R) +2g_{\mu\nu}\left(\Box_x +g^{\gamma\delta}\partial'_{x,\gamma} \partial'_{y,\delta}\right) \right.&
 \nonumber\\
 &\left. -2\left(D_{x,\mu}'\partial_{x,\nu}' +\partial'_{x,\mu}\partial'_{y,\nu}\right)\right]-\eta g_{\mu\nu}(\Box_x-\xi R+m^2) &
\eeqan
Here $\partial'_{x,\mu}=\delta_\mu^\nu(v,x)\partial_{x,\nu}$, $\delta(v,x)$ being the geodesic transport and $D'_{x,\mu}$ analogously denotes a covariant derivative. The quantities $g,R$ and the Ricci tensor $R_{\mu\nu}$  are evaluated at $v$. Note here we use sign convention $(+---)$ instead of $(-+++)$ in \cite{Mor} which flips the sign in the last expression in the brackets (and some more in the text below).

The definition of $\mathfrak{H}_n$ for $d=4$ is 
\beq
\label{parametrix}
-\frac{1}{4\pi^2}\frac{u(x,y)}{\sigma(x,y)}-\frac{1}{4\pi^2}\log\left(-\frac{\sigma(x,y)}{\lambda^2}\right)\sum_{k=0}^n\frac{1}{k!} v_{k}(x,y)\sigma(x,y)^k.
\eeq
Here $\sigma(x,y)$ is the squared geodesic distance, $u(x,y)$ is the square root of the Van Vleck-Morette determinant (which for dimension 4 is $U_0(x,y)$ as defined in \cite{Mor} up to normalization). $u(x,y)$ and the $v_k(x,y)$ (corresponding to $U_{k+1}(x,y)$ of \cite{Mor}) are functions that can be determined by a recursive system of differential equations that depends exclusively on invariant quantities and the operator $\Box-\xi R+m^2$:
\begin{align}
2 g^{\mu\nu}(x) (\partial^x_{\mu} \sigma) (\partial^x_{\nu} u) + (\square^x \sigma - 4) u &= 0\\
2 g^{\mu\nu}(x) (\partial^x_{\mu} \sigma) (\partial^x_{\nu} v_0) + (\square^x \sigma - 2) v_0 &= -(\square^x + m^2 - \xi R(x))u\\
2 g^{\mu\nu}(x) (\partial^x_{\mu} \sigma) (\partial^x_{\nu} v_{k+1}) + (\square^x \sigma + 2k) v_{k+1} &= -(\square^x + m^2 - \xi R(x))v_k
\end{align}
The coincidence limits $v_k(v,v)$ up to normalization coincide with the Hadamard-Minkashisundram-De Witt-Seely coefficients \cite{Ful,Mor}. In particular, on RW space time, $v_k((t,\vec x),(t,\vec y))$ depends on $\vec x$ and $\vec y$ only through $\vec z^2=(\vec x-\vec y)^2$ and the coincidence limit $v_k((t,\vec x),(t,\vec x))$ of the $v_k$ depends only on $t$. The same hold true for derivatives of $v_k$ wrt $\vec z^2$.

The regularization at light like separated $x$, $y$ is done by adding an $i\epsilon(x^0-y^0)$ to $\sigma(x,y)$, letting $\epsilon \searrow 0$ and taking the real part. We do not need this in the following, as we will approach the coincidence limit from spatial directions exclusively.

It is shown that for $n\in\mathbb{N}$, the expression $\left[\left< \phi(x) \phi(y) \right>_{\omega}-\mathfrak{H}_n(x,y)\right]$ can be extended to a function in $C^n(C_v\times C_v)$ where $C_v$ is some convex normal neighborhood of $v$. The quantum averaged field fluctuations are defined as
\beq
\left<\phi^2\right>_{\omega,\lambda,\xi}=\lim_{(x,y)\to(v,v)}\left[\left< \phi(x) \phi(y) \right>_{\omega}-\mathfrak{H}_n(x,y)\right].
\eeq
where $n=0$ suffices if no derivatives of this quantity are required.
For $\eta\not=0$, the corresponding term in (\ref{EMTensor}) in the energy momentum tensor vanishes classically, but not quantum mechanically. For $\eta=1/3$, the so-defined energy-momentum tensor is conserved. For RW spacetimes and states of the form (\ref{Statedef}), the quantum averaged energy momentum tensor depends only on $t$ and is diagonal so that only energy conservation is non trivial  
\beq
\nabla^\mu\left<T_{\mu0}^{(1/3)}\right>_{\omega,\lambda,\xi} =\dot \rho+H(\rho+p)=0,
\eeq
where $\rho= \left<T_{00}^{(1/3,\lambda)}\right>_{\omega,\lambda,\xi}$ and $p=a^2\left<T_{jj}^{(1/3,\lambda)}\right>_{\omega,\lambda,\xi}$ are energy density and pressure, respectively, see e.g. \cite{Weinb}.  

Moretti also showed
\beqan
\label{traceeq}
g^{\mu\nu}\left<T_{\mu\nu}^{(1/3)}\right>_{\omega,\lambda,\xi}&=&\rho-3p=\left[-3\left(\frac{1}{6}-\xi\right)\Box+m^2\right]\left<\phi^2\right>_{\omega,\lambda,\xi}\nonumber\\
&+&\frac{1}{4\pi^2}v_1+cm^2+c'\, m^2R+c''\,\Box R
\eeqan
where $v_1=v_1(t,\xi,m^2)=v_1((t,\vec v),(t,\vec v))$ and $c,c',c''$ can be calculated from $A,B,C$ and $D$. The D'Alembert operator here does not require point splitting and hence on flat RW space time can be replaced by $d^2/dt^2+3Hd/dt$.  Interestingly, this equation can be seen as the equation of state and the right hand side gives the deviation of {\em quantum} matter from the state equation of hot (relativistic) matter $p=\rho/3$. For Minkowski space and $\omega$ the Minkowski vacuum, we expect $p=\rho=0$ which can be achieved through $\lambda^2=4e^{\frac{7}{4}-2\gamma}/m^2$ with $\gamma$ the Euler constant. $v_1$ has been calculated in \cite{DFP} for the case of flat RW space time as 
\beqan
\label{U2}
v_1&=&\frac{1}{60}\left(\dot H H^2+H^4\right)+\frac{1}{24}\left(\frac{1}{5}-\xi\right)\Box R
\\
&-&\frac{9}{2}\left(\frac{1}{6}-\xi\right)^2\left( \dot H^2+4H^2\dot H+4H^4\right)-\frac{m^4}{8}+\frac{1}{4}\left( \frac{1}{6}-\xi\right)m^2R.\nonumber
\eeqan 
In the case of a massless field with conformal coupling $m^2=0$, $\xi=1/6$, the last term on the right hand side of (\ref{traceeq}) is called the conformal anomaly. 

Combining (\ref{semiclassicEEQ}) and (\ref{traceeq}) one arrives at the following equation of motion
\beqan
\label{motion}
-R&=&8\pi G\left\{\left[-3\left(\frac{1}{6}-\xi\right)\left(\frac{d^2}{dt^2}+3H\frac{d}{dt}\right)+m^2\right]\left<\phi^2\right>_{\omega,\lambda,\xi}\right.\nonumber\\&&\left.~~~~~~~~~~~~~~~~~~~~~~~~~~~~~~~~~~~~~+\frac{1}{4\pi^2}v_1+cm^4+c'\,m^2R+c''\,\Box R\right\}
\eeqan

We would like to find an expression for $(\frac{d^2}{dt^2}+3H\frac{d}{dt})\left<\phi^2\right>_{\omega,\lambda,\xi}$ that does not include terms $\left<\ddot\phi\phi\right>_{\omega,\lambda,\xi}$  since only then second order time derivatives of the field $\phi$ do only occur on the left hand side of (\ref{mKGG1}). It is clear that $\frac{d}{dt}\left<\phi^2\right>_{\omega,\lambda,\xi}$  equals $\langle\phi\dot\phi\rangle_{\omega,\lambda,\xi}+\langle\dot \phi\phi\rangle_{\omega,\lambda,\xi}$. Let $h(t)=f(s,s')|_{s=s'=t}$. Then, $\ddot h(t)=(\partial^2_s+\partial^2_{s'}+2\partial_s\partial_{s'})f(s,s')|_{s=s'=t}$. Applying this to our problem and using the fact \cite[Lemma 2.1]{Mor} that 
\beq
\left<\phi(\Box-\xi R+m^2)\phi\right>_{\omega,\lambda,\xi}=\frac{3}{2\pi^2}\, v_1,
\eeq
we conclude that 
\beq
\left(\frac{d^2}{dt^2}+3H\frac{d}{dt}\right)\langle \phi^2\rangle_{\omega,\lambda,\xi}=2\langle\dot\phi^2\rangle_{\omega,\lambda,\xi}+2\left<\phi (a^{-2}\Delta+\xi R-m^2)\phi\right>_{\omega,\lambda,\xi}+\frac{3}{\pi^2} v_1.
\eeq  
Here $\Delta$ stands for the Laplacian on $\mathbb{R}^3$.  We summarize the discussion in the following theorem:

\begin{Satz}
\label{2.1thm}
The equation of motion for semi-classical Einstein equation on flat Robertson-Walker space time can be written as follows:
\beqan
\label{motion2}
-R&=&8\pi G\left\{\left(6\xi-1\right)\left(\langle\dot\phi^2\rangle_{\omega,\lambda,\xi}+a^{-2}\left<\phi \Delta\phi\right>_{\omega,\lambda,\xi}\right)\right.\\
&+&\left.\left[(2-6\xi)m^2-(1-6\xi)\xi R\right]\left<\phi^2\right>_{\omega,\lambda,\xi}+\frac{36\xi-5}{4\pi^2}v_1+cm^4+c'\, m^2R+c''\,\Box R\right\}\nonumber
\eeqan
with $R=-6(\dot H+2H^2)$, $\Box R=(\frac{d^2}{dt^2}+3H\frac{d}{dt})R$ and $v_1$ given by (\ref{U2}).
\end{Satz}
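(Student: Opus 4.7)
The plan is to obtain (\ref{motion2}) by substituting the expression for $(\frac{d^2}{dt^2}+3H\frac{d}{dt})\langle\phi^2\rangle_{\omega,\lambda,\xi}$ derived in the paragraph preceding the theorem into equation (\ref{motion}), and then regrouping by the basic objects $\langle\dot\phi^2\rangle_{\omega,\lambda,\xi}$, $a^{-2}\langle\phi\Delta\phi\rangle_{\omega,\lambda,\xi}$, $\langle\phi^2\rangle_{\omega,\lambda,\xi}$ and $v_1$. Equation (\ref{motion}) itself arises from taking the trace of the semiclassical Einstein equation, $-R=8\pi G\,g^{\mu\nu}\langle T_{\mu\nu}^{(1/3)}\rangle_{\omega,\lambda,\xi}$, and combining it with Moretti's trace identity (\ref{traceeq}).

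To justify the auxiliary identity
\[
\left(\frac{d^2}{dt^2}+3H\frac{d}{dt}\right)\langle\phi^2\rangle_{\omega,\lambda,\xi}=2\langle\dot\phi^2\rangle_{\omega,\lambda,\xi}+2\langle\phi(a^{-2}\Delta+\xi R-m^2)\phi\rangle_{\omega,\lambda,\xi}+\frac{3}{\pi^2}v_1,
\]
I would apply the coincidence rule $\ddot h(t)=(\partial_s^2+\partial_{s'}^2+2\partial_s\partial_{s'})f(s,s')|_{s=s'=t}$ to the renormalized symmetric equal-space two-point function, use the Robertson--Walker form $\Box=\partial_t^2+3H\partial_t-a^{-2}\Delta$ to rewrite $\ddot\phi+3H\dot\phi$ as $\Box\phi+a^{-2}\Delta\phi$, and finally invoke Moretti's Lemma~2.1, which gives $\langle\phi(\Box-\xi R+m^2)\phi\rangle_{\omega,\lambda,\xi}=\frac{3}{2\pi^2}v_1$ rather than the classical zero.

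The remaining step is pure bookkeeping. The prefactor $-3(\frac{1}{6}-\xi)\cdot 2=6\xi-1$ multiplies both $\langle\dot\phi^2\rangle_{\omega,\lambda,\xi}$ and $a^{-2}\langle\phi\Delta\phi\rangle_{\omega,\lambda,\xi}$. The coefficient of $\langle\phi^2\rangle_{\omega,\lambda,\xi}$ combines the direct $m^2$ appearing in (\ref{motion}) with the $(6\xi-1)(\xi R-m^2)$ contribution from the substitution, producing $(2-6\xi)m^2-(1-6\xi)\xi R$. The $v_1$ coefficient sums $\frac{1}{4\pi^2}$ with $(6\xi-1)\cdot\frac{3}{2\pi^2}$ to give $\frac{36\xi-5}{4\pi^2}$, and the renormalization terms $cm^4+c'm^2R+c''\Box R$ pass through unchanged.

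There is no serious obstacle; the theorem simply repackages the preceding discussion into a form free of second-order time derivatives on the right-hand side, which is what one needs for the dynamical formulation (\ref{mKGG1}). The one conceptual subtlety worth emphasizing is the essential use of Moretti's Lemma 2.1: classically the field equation $(\Box-\xi R+m^2)\phi=0$ would force this bilinear to vanish, but after subtracting the Hadamard parametrix the renormalized expectation picks up the anomaly $\frac{3}{2\pi^2}v_1$, and this accounts precisely for the shift from the $\frac{1}{4\pi^2}v_1$ of (\ref{motion}) to the $\frac{36\xi-5}{4\pi^2}v_1$ of (\ref{motion2}).
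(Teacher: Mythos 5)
Your proposal is correct and follows essentially the same route as the paper: the theorem is stated there as a summary of the immediately preceding discussion, namely combining the trace of (\ref{semiclassicEEQ}) with Moretti's identity (\ref{traceeq}) to get (\ref{motion}), then eliminating the second time derivative of $\left<\phi^2\right>_{\omega,\lambda,\xi}$ via the coincidence rule and Moretti's Lemma~2.1. Your coefficient bookkeeping (the factor $6\xi-1$, the combination $(2-6\xi)m^2-(1-6\xi)\xi R$, and $\frac{1}{4\pi^2}+\frac{3(6\xi-1)}{2\pi^2}=\frac{36\xi-5}{4\pi^2}$) all checks out.
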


From the above analysis it is clear that we need to compute $\left<\phi^2\right>_{\omega,\lambda,\xi}$ and its second time derivative in terms of the functions $G_{\phi\phi,}$\ldots and $a$,$H$, $\dot H$, $R$, $\dot R$, $\ddot R$ \ldots  in order to combine (\ref{modeMotion}) and (\ref{motion2}) to a closed system of equations. 

Hence we need to calculate those terms in $\mathfrak{H}(x,y)|_{x^0=y^0=t}$, $\Delta_{\vec x}\mathfrak{H}(x,y)|_{x^0=y^0=t}$ and $\frac{\partial}{\partial x^0}\frac{\partial}{\partial y^0}\mathfrak{H}(x,y)|_{x^0=y^0=t}$ that either are singular or contribute a (time dependent) quantity to the spatial coincidence limit $\vec z=(\vec x-\vec y)\to 0$.

\section{Leading Terms of the Hadamard Distribution}

We start the task described at the end of the preceding section via a perturbative calculation of the singularities in the leading term $\mathfrak{H'}(x,y)=-(4 \pi^2)^{-1} \frac{u(x,y)}{\sigma(x,y)}$ of the Hadamard distribution.  In normal coordinates $X$ around $y$, $\sigma$ takes the form
\beq
2 \sigma = \eta_{\mu\nu} X^{\mu} X^{\nu}
\eeq
with $\eta_{\mu\nu}$ the Minkowski metric. Thus the computation is reduced to finding normal coordinates perturbatively in dependence of the canonical coordinates. We will need this expansion to fifth order to fix all singularities of $\lim_{y\rar x} \limits \mathfrak{H'}(x,y)$ and its first and second derivatives as well as all homogeneous terms. The normal coordinates have been calculated perturbatively to fifth order in \cite{Bre} (using computer algebra support), formula (11.12) to (11.16) for a general metric. We plug in
\beq
\partial^n_0\Gamma^{\alpha}_{\beta\gamma} = \delta^{\alpha}_0 \delta^{i}_{\beta} \delta^{j}_{\gamma} \delta_{ij} L_n a^2 + \delta^{\alpha}_i (\delta^{0}_{\beta} \delta^{i}_{\gamma}+\delta^{i}_{\beta} \delta^{0}_{\gamma}) H^{(n)}
\eeq
where $H^{(n)}$ denotes the nth derivative of $H$ with respect to time and the $L_n$ are
\beqan
L_0 \is H\\
L_1 \is \dot{H} + 2 H^2\\
L_2 \is \ddot{H} + 6 \dot{H}H + 4 H^3\\
L_3 \is \tdot{H} + 8 \ddot{H}H + 6 \dot{H}^2 + 24 \dot{H}H^2 + 8 H^4
\eeqan

Using these formulae we get
\beqan
2 \sigma \is z_0^2 - a^2 \vec{z}^2 \bigg\{ 1 + H z_0 + \frac{1}{3} (\dot{H} + H^2) z_0^2 + \frac{1}{12} H^2 a^2 \vec{z}^2 \nonumber \\
	&& + \frac{1}{12} (\ddot{H} + 2 \dot{H} H) z_0^3 + \frac{1}{12} (\dot{H}H + 2 H^3) a^2 \vec{z}^2 z_0  \nonumber\\
	&& + \frac{1}{180} (3 \tdot{H} + 6 \ddot{H}H + 2 \dot{H}^2 - 8 \dot{H}H^2 - 4 H^4) z_0^4 \nonumber \\
	&& + \frac{1}{360} (9 \ddot{H}H + 8 \dot{H}^2 + 74 \dot{H}H^2 + 48 H^4) a^2 \vec{z}^2 z_0^2 \nonumber \\
	&& + \frac{1}{360} (3 \dot{H}H^2 + 4 H^4) a^4 (\vec{z}^2)^2 \bigg\}+ \mathcal{O}(z^7) \label{worldFunctionExpansion}
\eeqan
where all time dependent terms are evaluated at $y_0$ and we use the abbreviation $z = x-y$.

\begin{Lemma}
$\sigma$ is symmetric under exchange of $x$ and $y$
\end{Lemma}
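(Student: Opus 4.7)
The plan is to establish the symmetry in two complementary ways: a short conceptual argument that suffices in principle, and a direct verification on the explicit formula (\ref{worldFunctionExpansion}) that doubles as a consistency check on the fifth-order expansion.

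The conceptual part is essentially built into the definition. Inside a convex normal neighborhood containing $x$ and $y$ there is a unique geodesic joining the two points, and $\sigma(x,y)$ is one half the squared proper interval along it (with the Lorentzian sign). Reversing the affine parameter of this geodesic yields the geodesic from $y$ to $x$ of identical squared length, so $\sigma(x,y) = \sigma(y,x)$ follows immediately from the definition of the Synge world function. At this level the lemma requires no further input.

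However, (\ref{worldFunctionExpansion}) is manifestly asymmetric in its presentation: the time-dependent coefficients $H, \dot H, \ddot H, \tdot H$ are evaluated at $y_0$, and the expansion variable $z = x - y$ treats the two arguments asymmetrically. Since this formula is the working tool for the rest of the paper, I would complement the conceptual proof with a direct check. One applies the substitution $z \mapsto -z$, which sends $z_0 \mapsto -z_0$ while leaving $\vec z^2$ invariant, and simultaneously Taylor-expands each coefficient around $y_0$ to re-evaluate it at $x_0 = y_0 + z_0$, using
\begin{equation*}
H(y_0+z_0) = H + \dot H\, z_0 + \tfrac{1}{2}\ddot H\, z_0^2 + \tfrac{1}{6}\tdot H\, z_0^3 + \mathcal{O}(z_0^4)
\end{equation*}
and analogously for the derivatives. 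Collecting the resulting contributions by total order in $z_0$ and $\vec z^2$ up to $\mathcal{O}(z^6)$ must reproduce (\ref{worldFunctionExpansion}) verbatim.

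The main obstacle is purely combinatorial bookkeeping. For example, the leading $-a^2 \vec z^2 H z_0$ piece flips sign and acquires a Taylor tail $+ a^2 \vec z^2 (\dot H z_0^2 + \tfrac{1}{2}\ddot H z_0^3 + \ldots)$ under the exchange; this tail must then be absorbed by the corresponding sign flips and Taylor corrections of the $z_0^2, z_0^3, \ldots$ coefficients, and similarly the $\frac{1}{12}H^2 a^2 \vec z^2$ piece mixes with the $a^2 \vec z^2 z_0$ coefficient. The cancellations are entirely mechanical and, being guaranteed by the intrinsic argument above, only need to be checked at the level of matching polynomial coefficients order by order; no additional geometric input beyond what entered the derivation of (\ref{worldFunctionExpansion}) is required.
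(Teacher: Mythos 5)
Your proposal is correct and its computational half coincides with the paper's proof: the authors likewise Taylor-expand the suppressed-argument coefficients $a$, $H$, $\dot H$, $\ddot H$ from $y_0$ to $x_0$, flip the sign of $z_0$, and verify order by order that the even-order terms of (\ref{worldFunctionExpansion}) are reproduced. Your preliminary conceptual observation---that symmetry of the world function is automatic from its definition as the squared geodesic distance---is valid but is not what the lemma is really testing; in context the lemma is a consistency check on the explicit fifth-order expansion, so the mechanical verification is the substantive content, and you correctly identify and outline exactly that calculation.
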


\begin{proof}
The proof is done by straightforward calculation, however on has to keep in mind that $a$, $H$ and its derivative all have a suppressed argument $y_0$ such that they have to be Taylor expanded. The terms of even orders in $z$ do not change signs so it has to be shown, that they are not affected by the terms from the Taylor expansion of the lower order coefficients. Using abbreviations $a_y=a(y_0)$ and the likes we get to the relevant orders
\begin{align}
\frac{a_x^2}{a_y^2} =& 1 + 2 H_y z_0 + (\dot{H}_y + 2H_y^2) z_0^2 + \frac{1}{3}(\ddot{H}_y + 6\dot{H}_y H_y + 4H_y^3) z_0^3 \nonumber\\
&+ \frac{1}{12}(\tdot{H}_y + 8\ddot{H}_y H_y + 6 \dot{H}_y^2 + 24 \dot{H}_y H_y^2 + 8H_y^4)z_0^4\\
-H_x \frac{a_x^2}{a_y^2} z_0 =& -H_y z_0 - (\dot{H}_y + 2H_y^2) z_0^2 - \frac{1}{2}(\ddot{H}_y + 6\dot{H}_y H_y + 4H_y^3) z_0^3 \nonumber\\
&- \frac{1}{6}(\tdot{H}_y + 8\ddot{H}_y H_y + 6 \dot{H}_y^2 + 24 \dot{H}_y H_y^2 + 8H_y^4)z_0^4\\
\frac{1}{3}(\dot{H}_x + H_x^2)\frac{a_x^2}{a_y^2}z_0^2 =& \frac{1}{3}(\dot{H}_y + H_y^2) z_0^2 + \frac{1}{3}(\ddot{H}_y + 4\dot{H}_y H_y + 2H_y^3) z_0^3 \nonumber\\
&+ \frac{1}{6}(\tdot{H}_y + 6\ddot{H}_y H_y + 4 \dot{H}_y^2 + 14 \dot{H}_y H_y^2 + 4H_y^4)z_0^4\\
-\frac{1}{12} (\ddot{H}_x + 2 \dot{H}_x H_x)\frac{a_x^2}{a_y^2} z_0^3 =& -\frac{1}{12}(\ddot{H}_y + 2\dot{H}_y H_y) z_0^3 \nonumber\\
&- \frac{1}{12}(\tdot{H}_y + 4\ddot{H}_y H_y + 2 \dot{H}_y^2 + 4 \dot{H}_y H_y^2)z_0^4\\
\frac{1}{12} H_x^2 \frac{a_x^4}{a_y^4} =& \frac{1}{12} H_y^2 + \frac{1}{6}(\dot{H}_y H_y + 2 H_y^3) z_0 \nonumber\\
&+ \frac{1}{12}(\ddot{H}_y H_y + \dot{H}_y^2 + 10 \dot{H}_y H_y^2 + 8 H_y^4)z_0^2\\
-\frac{1}{12} (\dot{H}_x H_x + 2 H_x^3)\frac{a_x^4}{a_y^4} z_0 =& -\frac{1}{12}(\dot{H}_y H_y + 2 H_y^3) z_0 \nonumber\\
&- \frac{1}{12}(\ddot{H}_y H_y + \dot{H}_y^2 + 10 \dot{H}_y H_y^2 + 8 H_y^4)z_0^2
\end{align}
Adding up the first four and the last two terms one gets
\begin{align}
&1 + H_y z_0 + \frac{1}{3} (\dot{H} + H^2) z_0^2 + \frac{1}{12} (\ddot{H} + 2 \dot{H} H) z_0^3\\
&\frac{1}{12} H_y^2 + \frac{1}{12}(\dot{H}_y H_y + 2 H_y^3) z_0
\end{align}
which proves the claim.
\end{proof}

To calculate $u$ we use the first equation of well know Hadamard recursion which is deduced from the Klein Gordon equation
\beq
2 g^{\mu\nu}(x) (\partial^x_{\mu} \sigma) (\partial^x_{\nu} u) + (\square^x \sigma - 4) u = 0
\eeq
where one has to take into account, that the derivatives are with respect to $x$ and the metric is evaluated at $x$ in the above formula, which leads to additional terms in the calculation. We make the ansatz
\beq
u = 1 + \mu z_0^2 + \nu \vec{z}^2 + \rho z_0^3 + \tau \vec{z}^2 z_0 + \phi z_0^4 + \psi \vec{z}^2 z_0^2 + \chi (\vec{z}^2)^2
\eeq
where all coefficients are evaluated at $y_0$. The first term being $1$ and the absence of a first order term follow from the requirements that $u(x,x)=1$ and $u(x,y) = u(y,x)$.

The calculation of the coefficients then yields 
\beqan
u \is 1 - \frac{1}{4} (\dot{H} + H^2) z_0^2 + \frac{1}{12} (\dot{H} + 3 H^2) a^2 \vec{z}^2 \nonumber \\
	&& - \frac{1}{8} (\ddot{H} + 2 \dot{H} H) z_0^3 + \frac{1}{24} (\ddot{H} + 8 \dot{H}H + 6 H^3) a^2 \vec{z}^2 z_0  \nonumber\\
	&& + \frac{1}{480} (- 18 \tdot{H} - 36 \ddot{H}H - 17 \dot{H}^2 + 38 \dot{H}H^2 + 19 H^4) z_0^4 \nonumber \\
	&& + \frac{1}{240} (3 \tdot{H} + 26 \ddot{H}H + 17 \dot{H}^2 + 52 \dot{H}H^2 + 1 H^4) a^2 \vec{z}^2 z_0^2 \nonumber \\
	&& + \frac{1}{480} (4 \ddot{H}H + 3 \dot{H}^2 + 36 \dot{H}H^2 + 29 H^4) a^4 (\vec{z}^2)^2 + \mathcal{O}(z^5)
\eeqan
which can be checked to be symmetric under interchange of $x$ and $y$ by a straight forward calculation. This equation has been derived using computer algebra support, but has been validated up to second order by hand calculations.

From this we obtain

\begin{Satz}
\label{3.1theo}
The most singular order of the Hadamard distribution is given by
\begin{align}
- 4 \pi^2 \mathfrak{H'} (z) |_{z^0 = 0} =& - \frac{1}{a^2 \vec{z}^2} - \frac{1}{12} \left(\dot{H} + 2 H^2 \right) \nonumber\\
 & - \frac{1}{1440}(12\ddot{H}H + 9 \dot{H}^2 + 86 \dot{H}H^3 + 51 H^4) a^2 \vec{z}^2 + \mathcal{O}((\vec{z}^2)^2)\\
- 4 \pi^2 \dot{\mathfrak{H'}} (z) |_{z^0 = 0} =&\frac{H}{a^2 \vec{z}^2} - \frac{1}{24} (\ddot{H} + 4 \dot{H} H) + \mathcal{O}(\vec{z}^2)\\
- 4 \pi^2 \pabl{}{x^0} \pabl{}{y^0} \mathfrak{H'}(z) |_{z^0 = 0} =& \frac{2}{a^4 (\vec{z}^2)^2} - \frac{H^2}{a^2 \vec{z}^2} \nonumber\\
 & - \frac{1}{240} (4 \tdot{H} + 16 \ddot{H}H + 27 \dot{H}^2 + 30 \dot{H}H^2 + 17 H^4) + \mathcal{O}(\vec{z}^2)
\end{align}
\end{Satz}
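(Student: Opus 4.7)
The plan is the following. Since $\mathfrak{H}'(x,y)=-(4\pi^2)^{-1}u(x,y)/\sigma(x,y)$, all three formulas reduce to extracting specific coefficients of the formal series $u/\sigma$ in $z=x-y$, whose ingredients are the polynomial expansions of $u$ and $2\sigma$ derived above (with coefficients depending on $y^0$). The first formula restricts to $z^0=0$, killing every monomial that carries a $z_0$ factor. What remains is a polynomial in $\vec z^{\,2}$; I invert the series for $\sigma|_{z^0=0}$ geometrically and multiply by $u|_{z^0=0}$, truncating at $O((\vec z^{\,2})^1)$ past the $1/\vec z^{\,2}$ singularity. The non-trivial algebra is the identity $\tfrac{1}{12}(\dot H+3H^2)-\tfrac{1}{12}H^2=\tfrac{1}{12}(\dot H+2H^2)$ at the constant level, plus an analogous but substantially longer collection of contributions for the $a^2\vec z^{\,2}$ coefficient, whose natural common denominator is $1440$.

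For the second and third formulas, I decompose $\mathfrak{H}'(x,y)$ as a power series in $z^0$ with coefficients depending on $y^0$ and $\vec z$,
\[\mathfrak{H}'(x,y)=F(y^0,z^0,\vec z)=F_0(y^0,\vec z)+F_1(y^0,\vec z)\,z^0+F_2(y^0,\vec z)\,(z^0)^2+O((z^0)^3).\]
Treating $y^0$ and $z^0=x^0-y^0$ as independent variables, $\partial_{x^0}$ coincides with $\partial_{z^0}$ while $\partial_{y^0}|_{x^0\text{ fixed}}=\partial_t-\partial_{z^0}$, where $\partial_t$ acts only on the explicit $y^0$-dependence of the coefficients. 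Hence
\[\partial_{x^0}\mathfrak{H}'\big|_{x^0=y^0=t}=F_1(t,\vec z),\qquad \partial_{x^0}\partial_{y^0}\mathfrak{H}'\big|_{x^0=y^0=t}=\dot F_1(t,\vec z)-2\,F_2(t,\vec z).\]
By the $x\leftrightarrow y$ symmetry of $u$ and $\sigma$ (for $\sigma$ just established in the preceding lemma), $F_1$ equals $\tfrac{1}{2}\partial_t F_0$, which is precisely what the dot in $\dot{\mathfrak H}'$ denotes. Thus the second formula extracts $F_1$ up to the constant-in-$\vec z^{\,2}$ term, and the third extracts $\dot F_1$ and $F_2$ to the same order; both are obtained by the same series-division procedure as in the first formula, now retaining the linear and quadratic terms in $z^0$. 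The announced singular parts $H/(a^2\vec z^{\,2})$ and $2/(a^4(\vec z^{\,2})^2)-H^2/(a^2\vec z^{\,2})$ emerge from the dominant $\sigma^{-1}$ and $\sigma^{-2}$ amplifications applied to the corresponding $z^0$-strata of $2\sigma$, while the finite constants require tracking the subleading cross-terms between $u$ and the inverted series.

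The main obstacle is purely combinatorial: the expansions of $u$ and $2\sigma$ are polynomials in $(z^0,\vec z^{\,2})$ of total degree four with many non-zero cross-terms, so the division $u/\sigma$ at each stratum assembles on the order of ten contributions whose coefficients mix $H$, $\dot H$, $\ddot H$, $\tdot{H}$ and their products. This is the reason the authors rely on computer algebra; beyond the book-keeping the derivation is mechanical. The $x\leftrightarrow y$ symmetry certified by the preceding lemma is the structural ingredient that guarantees no stray terms appear. Once the truncated quotients are in hand, the three formulas of the theorem follow by reading off coefficients and collecting common factors.
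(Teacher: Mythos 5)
Your proposal is correct and follows essentially the same route as the paper: the paper's own proof consists of exactly this series division of $u$ by the expansion of $\sigma$ in powers of $\vec z^{\,2}$ and $z^0$, with the remark that the time derivatives of the $y^0$-dependent coefficients must be tracked and that the bookkeeping is too lengthy to display. Your explicit $F_0,F_1,F_2$ stratification with $\partial_{x^0}=\partial_{z^0}$, $\partial_{y^0}=\partial_t-\partial_{z^0}$ and the use of the symmetry lemma to identify $F_1=\tfrac12\partial_t F_0$ makes precise what the paper leaves implicit, but it is the same computation.
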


The proof is done by straightforward calculations where the occurring fractions in powers of $Z = \vec{z}^2$ are expanded in a power series like
\beqan
\frac{L + M Z + N Z^2}{P + Q Z + S Z^2} \is \frac{L}{P} + \frac{M P - L Q}{P^2} Z + \frac{N P^2 - L P S - M P Q + L Q^2}{P^3} Z^2
\eeqan

Even if placeholders are inserted for the coefficients of the powers of $Z$ and $z_0$ in $\sigma$ and $u$ the calculations remain very lengthy, as the coefficient's derivatives have to be taken into account. Therefore we will not show them here explicitly.

We calculate the sub leading terms of the Hadamard parametrix in position space to the order relevant to calculate the homogeneous term of the energy momentum tensor. This means we use the Hadamard recursion
\beqan
2 g^{\mu\nu}(x) (\partial^x_{\mu} \sigma) (\partial^x_{\nu} v_0) + (\square^x \sigma - 2) v_0 \is -(\square^x + m^2 - \xi R(x))u\\
2 g^{\mu\nu}(x) (\partial^x_{\mu} \sigma) (\partial^x_{\nu} v_1) + (\square^x \sigma) v_1 \is -(\square^x + m^2 - \xi R(x))v_0
\eeqan
to calculate
\beqan
v_0 \is -\frac{1}{2}\left( \left( \frac{1}{6} - \xi \right)R + m^2 \right) + \frac{1}{4} (\ddot{H} +4 \dot{H} H ) z_0 \nonumber\\
&&+\frac{1}{240} \Big((21-120\xi) \tdot{H} + (87-480\xi) \ddot{H}H + (54-300\xi) \dot{H}^2 \nonumber\\
&&\qquad - (76-540\xi) \dot{H}H^2 - (58-360\xi) H^4 + 30 m^2(\dot{H} + H^2)\Big) z_0^2 \nonumber\\
&&+\frac{1}{240} \Big(- \tdot{H} + (3-60\xi) \ddot{H}H + (6-60\xi) \dot{H}^2 + (76-540\xi) \dot{H}H^2 \nonumber\\
&&\qquad + (58-360\xi) H^4 - 10 m^2(\dot{H} + 3 H^2)\Big) a^2 \vec{z}^2 + \mathcal{O}(z^3) \label{v0result}
\eeqan
and
\begin{align}
v_1 =& \frac{1}{120} \Big( (1-5\xi)\square R + \frac{5}{12} (1-6\xi)^2 R^2 - 2 (\dot{H} + H^2)H^2 + 5 (1-6\xi) R m^2 \nonumber\\
&+ 15 m^4 \Big) + \mathcal{O}(z) \label{v1result}
\end{align}

\section{Mode Expansion}

Two things remain to be done: The singular terms in (\ref{parametrix}) have to be related to the mode expansion $G_{\phi\phi,k}$ etc. of the two point functions and the homogeneous term of the energy momentum tensor has to be calculated. 

First, the expressions in Theorem \ref{3.1theo} can easily be Fourier transformed with a method explained e.g. in \cite{Con}. Using the distributional Fourier transforms \linebreak $\mathscr{F}(|\vec{z}\, |^{-2})(\vec k) = 2 \pi^2 k^{-1}$, $\mathscr{F}( |\vec{z}\, |^{-4})(\vec k)  = - \pi^2 k$
with $k = |\vec{k}|$ and ignoring for now homogeneous terms that will only contribute terms proportional to the delta distribution in the zero mode and positive order terms in $Z$, we get the mode expressions
\beqan
\mathscr{F} \left(\mathfrak{H'}|_{z^0 = 0} \right)(\vec{k}) \is a^{-2} \frac{1}{2k}\\
\mathscr{F} \left(\dot{\mathfrak{H'}}|_{z^0 = 0} \right)(\vec{k}) \is - a^{-2} \frac{H}{2k}\\
\mathscr{F} \left(\pabl{}{x^0} \pabl{}{y^0} \mathfrak{H'}|_{z^0 = 0} \right)(\vec{k}) \is a^{-4} \frac{k}{2} + a^{-2} \frac{H^2}{2k} \label{fourier}
\eeqan

Here it needs to be taken into account that all singular orders in $\vec z^2$ that are encountered  are locally integrable functions on $\mathbb{R}^3$ with the exception of the $\frac{1}{z^4}$ term in $\pabl{}{x^0} \pabl{}{y^0} \mathfrak{H'}|_{z^0 = 0}$. Continuation of the $ \pabl{}{x^0} \pabl{}{y^0}[ \left<\phi(x)\phi(y)\right>-\mathfrak{H}_n(x,y)]|_{z^0 = 0}$ for $n\geq 2$ to $\vec z=0$ thus implicitly induces a rotational invariant regularization of $\frac{1}{\vec z^4}$, as it can be written as a difference of already regularized terms and the newly found continuous function. The analytic regularization \cite{Con} of $|\vec z|^\zeta$ is well defined except for poles at $\zeta=-3,-5,-7,\ldots$. It is clearly rotational invariant. Furthermore $|\vec z|^\zeta$ with $\zeta=-4$ is the only invariant extension of $\frac{1}{\vec z^4}$ on $\mathbb{R}^3$ that preserves the scaling degree, as there are no rotation invariant linear combinations of first order derivative of the delta distribution at zero. But the difference of two extensions with the given properties has to be such a linear combination, \cite{Con}. Hence we conclude that it has to coincide with the regularization induced by the continuation of $\pabl{}{x^0} \pabl{}{y^0}[ \left<\phi(x)\phi(y)\right>-\mathfrak{H}_n(x,y)]|_{z^0 = 0}$ to $\vec z=0$.  The Fourier transform (\ref{fourier}) can thus be calculated based on the analytic regularization prescription \cite{Con}.  

We now turn to the mode expansion of the sub leading terms. Expanding the logarithmic factor with the help of (\ref{worldFunctionExpansion}) yields
\beq
\left.\log\left(-\frac{\sigma(x,y)}{\lambda^2}\right)\right|_{z^0=0}=\log\left(\frac{a^2}{\lambda^2}\right)+\log(\vec z^2)-\sum_{n=1}^\infty \left(-\frac{1}{12}H^2a^2\vec z^2+{\cal O}(\vec z ^4)\right)^n,
\eeq  
where the infinite sum can be truncated after a few relevant terms. Inserting (\ref{worldFunctionExpansion}) to expand the powers $\sigma(x,y)^k$ in (\ref{parametrix}), we easily see that the singular contribution from the second term in (\ref{parametrix}) to $[ \left<\phi(x)\phi(y)\right>-\mathfrak{H}_n(x,y)]|_{z^0 = 0}$ is 
\beq
\label{logTermExpansion}
\frac{1}{4\pi^2} v_0((t,\vec x),(t,\vec y))\left(\log\left(\frac{a^2}{\lambda^2}\right)+\log(\vec z^2)\right).    
\eeq
Likewise, using $\frac{1}{\sigma(x,y)}|_{z^0=0}=-\frac{1}{\vec z^2}\sum_{n=0}^\infty (\frac{1}{12} H^2a^2\vec z^2+{\cal{O}}(\vec z^4))^n$, it is easily shown that the expansion of derivatives of the second term in (\ref{parametrix}) with respect to $\frac{\partial}{\partial x^0}$ and $\frac{\partial^2}{\partial x^0\partial y^0}$ at equal time into singular orders in $\vec z^2$ which is truncated after some sufficiently high order contains only terms $\sim \log(\vec z^2)$ and $\sim (\vec z^2)^k, k\geq 0$, where the coefficients are made of $\log(\lambda)$, $\log(a)$ and $U_k((t,\vec x),(t,\vec y))$-terms and their derivatives w.r.t. $\vec z^2$ along with their time derivatives; for the case with the two time derivatives an additional term with the singularity structure $\frac{1}{\vec z^2}$ needs to be taken into account as well. 

One could calculate the coefficients of the singular terms described above by calculating the derivatives of (\ref{v0result}) and (\ref{v1result}) and Fourier transforming the result by hand. However these calculations are very tedious, therefore we proceed differently here. As we have analyzed the general shape of the terms to account for, we can Fourier transform these terms and include them into the mode expansion with up to now undetermined coefficients. Doing so, we use that $\mathscr{F}(\log(|z|))(\vec k)= -4\pi^{3/2} \Gamma(\frac{3}{2}) k^{-3}$. Note that this distribution at $0$ does not extend a locally integrable function and hence requires a regularization prescription. The details can be found in appendix \ref{Aapp}. The remaining singular orders that occur have mode expansion $\mathscr{F}(\vec z^2\log(|z|))(\vec k)= 24\pi^{3/2} \Gamma(\frac{3}{2}) k^{-5}$ and $\mathscr{F}((\vec z^2)^n)(\vec k)=(2\pi)^3(-\Delta)^n\delta(\vec k) $, where we again neglect the latter for the time being.         

Next we multiply the leading terms in the singular order expansion with the powers of $a$ to get expressions similar to the two-point function
\beqan
\mathfrak{H'}_{\phi\phi} \is \frac{a^{-2}}{2k}\\
\mathfrak{H'}_{(\phi\pi)} \is - \frac{a H}{2k}\\
\mathfrak{H'}_{\pi\pi} \is \frac{a^2 k}{2} + \frac{a^4 H^2}{2k}
\eeqan
So, taking now into account the discussion above, we consider the ansatz
\beqan
\mathfrak{H}_{\phi\phi} \is \frac{a^{-2}}{2k} + \frac{\alpha_3}{2k^3} + \frac{\alpha_5}{2k^5} + \mathcal{O}\left( k^{-7} \right)\\
\mathfrak{H}_{(\phi\pi)} \is - \frac{a H}{2k} + \frac{\beta_3}{2k^3} + \frac{\beta_5}{2k^5} + \mathcal{O}\left( k^{-7} \right)\\
\mathfrak{H}_{\pi\pi} \is \frac{a^2 k}{2} + \frac{a^4 H^2}{2k} + \frac{\gamma_1}{2k} + \frac{\gamma_3}{2k^3} + \mathcal{O}\left( k^{-5} \right)
\eeqan
that captures all possible singular orders except for those concentrated in the zero mode.

Note that $\mathfrak{H}_n$ fulfills the Klein-Gordon equation up to terms that vanish in the coincidence limit and a zero mode term. Plugging our ansatz of the singular order expansion into the system of equations (\ref{modeMotion}) we obtain a couple of equations that help us to determine the unspecified coefficients

\beqan
\frac{\dot{\alpha}_3}{2k^3} + \frac{\dot{\alpha}_5}{2k^5} \is \frac{2 a^{-3}\beta_3}{2k^3} + \frac{2 a^{-3}\beta_5}{2k^5}\\
 - \frac{a (\dot{H} + 2 H^2)}{2k} + \frac{\dot{\beta}_3}{2k^3} \is - \frac{(-\xi R + m^2)a + a \alpha_3}{2k} - \frac{(-\xi R + m^2) a^3 \alpha_3 + a \alpha_5}{2k^3}\nonumber\\
	&& + \frac{a^{-3}\gamma_1}{2k} + \frac{a^{-3}\gamma_3}{2k^3}\\
\frac{2 a^4 H (\dot{H} + 2 H^2)}{2k} + \frac{\dot{\gamma}_1}{2k} + \frac{\dot{\gamma}_3}{2k^3} \is - \frac{- 2(-\xi R + m^2)a^4H + 2 a \beta_3}{2k}\nonumber\\&& - \frac{2(-\xi R + m^2)a^3 \beta_3 + 2 a \beta_5}{2k^3}
\eeqan
where we suppress undetermined orders.

Assuming these equations to hold order by order in $k$ we get two sets of equations
\beqan
\dot{\alpha}_3 \is 2 a^{-3} \beta_3 \label{a1}\\
\dot{\gamma}_1 \is -2 a \beta_3 + 2 a^4 H \left( \left( \frac{1}{6} - \xi \right) R + m^2 \right) \label{a2}\\
\alpha_3 \is a^{-4} \gamma_1 - \left(\frac{1}{6} - \xi \right) R - m^2 \label{a3}
\eeqan
and analogously
\beqan
\dot{\alpha}_5 \is 2 a^{-3} \beta_5\\
\dot{\gamma}_3 \is - 2 a \beta_5 - 2 a^3 \beta_3 \left( -\xi R + m^2 \right)\\
\alpha_5 \is a^{-4} \gamma_1 - a^{-1} \dot{\beta}_3 - a^2 \alpha_3 \left( -\xi R + m^2 \right) \label{b2}
\eeqan

We now want to solve these differential equations. The equation (\ref{a1})+(\ref{a2})-$\partial_t$(\ref{a3}) leads to
\beq
\dot{\gamma}_1 - 2 H \gamma = \frac{a^4}{2} \left( \frac{1}{6} - \xi \right) \left( \dot{R} + 2 H R \right) + a^4 H m^2
\eeq
which can be solved using standard methods. The solutions for $\alpha_3$ and $\beta_3$ can then be calculated straightforwardly without solving differential equations.

\begin{Lemma}
\label{4.1lem}
The solution to the system (\ref{a1})-(\ref{a3}) is
\beqan
\alpha_3 \is - \frac{1}{2} \left( \left(\frac{1}{6} - \xi \right) R + m^2 \right) + A a^{-2}\\
\beta_3 \is - \frac{a^3}{4} \left(\frac{1}{6} - \xi \right) \dot{R} - A H a\\
\gamma_1 \is \frac{a^4}{2} \left(\left(\frac{1}{6} - \xi \right) R + m^2 \right) + A a^2
\eeqan
where $A$ is some constant.
\end{Lemma}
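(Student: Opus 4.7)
The plan is to reduce the coupled system (\ref{a1})--(\ref{a3}) to a single first-order linear ODE for $\gamma_1$, solve that ODE, and then recover $\alpha_3$ and $\beta_3$ by algebra.

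First, I would eliminate $\alpha_3$ and $\beta_3$ as follows. Multiplying (\ref{a1}) by $a^4$ and adding (\ref{a2}) gives
\beq
a^4 \dot{\alpha}_3 + \dot{\gamma}_1 = 2 a^4 H \left[ \left( \tfrac{1}{6} - \xi \right) R + m^2 \right].
\eeq
The algebraic relation (\ref{a3}), rewritten as $a^4 \alpha_3 = \gamma_1 - a^4[(\tfrac{1}{6}-\xi)R + m^2]$, may be differentiated in time to express $a^4 \dot\alpha_3$ in terms of $\dot\gamma_1$, $H\gamma_1$, $\dot R$, $R$ and $\alpha_3$ itself; using (\ref{a3}) once more to eliminate the remaining $\alpha_3$ term yields precisely the scalar ODE stated in the text,
\beq
\dot{\gamma}_1 - 2 H \gamma_1 = \frac{a^4}{2}\left(\tfrac{1}{6} - \xi\right)\bigl(\dot R + 2 H R\bigr) + a^4 H m^2.
\eeq

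Second, I would solve this ODE with integrating factor $a^{-2}$, which works because $\frac{d}{dt}(a^{-2}) = -2 H a^{-2}$. Multiplying through gives
\beq
\frac{d}{dt}\bigl(a^{-2} \gamma_1\bigr) = \tfrac{1}{2}\bigl(\tfrac{1}{6} - \xi\bigr)\, a^2 (\dot R + 2 H R) + a^2 H m^2.
\eeq
The essential observation is that the right-hand side is a total time derivative: $a^2(\dot R + 2 H R) = \frac{d}{dt}(a^2 R)$ and $2 a^2 H = \frac{d}{dt}(a^2)$. Integrating immediately yields
\beq
\gamma_1 = \frac{a^4}{2}\left[\left(\tfrac{1}{6} - \xi\right) R + m^2\right] + A\, a^2,
\eeq
with $A$ the single constant of integration.

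Finally, substituting this $\gamma_1$ into the algebraic relation (\ref{a3}) produces the stated formula for $\alpha_3$ in one line, and then (\ref{a1}) gives $\beta_3 = \tfrac{a^3}{2}\,\dot\alpha_3$, which after differentiating $\alpha_3$ yields the claimed $\beta_3 = -\tfrac{a^3}{4}(\tfrac{1}{6}-\xi)\dot R - A H a$. A brief back-check that (\ref{a2}) is identically satisfied closes the argument. The only real obstacle is the bookkeeping in the elimination step: one must be careful with signs and factors of $a$ when differentiating (\ref{a3}) to remove $\dot\alpha_3$; once the scalar ODE is in hand, recognising the right-hand side as an exact derivative makes the integration essentially automatic.
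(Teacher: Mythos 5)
Your proposal is correct and follows essentially the same route as the paper: both reduce the system to the single ODE $\dot\gamma_1-2H\gamma_1=\frac{a^4}{2}(\frac{1}{6}-\xi)(\dot R+2HR)+a^4Hm^2$, integrate it by recognising the right-hand side (after multiplying by the integrating factor $a^{-2}$, equivalently the paper's variation-of-constants kernel $a^2(t)/a^2(t')$) as the exact derivative $\frac{1}{2}\partial_t[((\frac{1}{6}-\xi)R+m^2)a^2]$, and then recover $\alpha_3$ and $\beta_3$ algebraically. The only cosmetic difference is that you obtain $\beta_3$ from (\ref{a1}) while the paper reads it off from (\ref{a2}); both give the same result and the remaining equation is then satisfied automatically.
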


\begin{proof}
A differential equation of the form
\beq
\dot{x}(t) - 2 f(t) x(t) = g(t)
\eeq
has the solution
\beq
x(t) = \int_0^t \limits g(t') \exp{\left( 2 \int_{t'}^t \limits f(t'') dt'' \right)} dt' + A \exp{\left( 2 \int_0^t \limits f(t') dt' \right)}
\eeq
with $A$ being constant. In our case $f(t) = H$, such that $\exp{\left( 2 \int_{t_1}^{t_2} \limits H(t') dt' \right)} = \frac{a^2(t_1)}{a^2(t_2)}$.

We get the solution
\beqan
\gamma_1 \is a^2 \int_0^t \limits \left( \frac{1}{2} \left( \frac{1}{6} - \xi \right) \left( \dot{R} + 2 H R \right) + H m^2 \right) a^2 dt' + A a^2\\
\is \frac{a^2}{2} \int_0^t \limits \partial_{t'} \left[ \left( \left( \frac{1}{6} - \xi \right) R + m^2 \right) a^2 \right] dt' + A a^2\\
\is \frac{a^4}{2} \left(\left(\frac{1}{6} - \xi \right) R + m^2 \right) + A a^2
\eeqan
where the constant $A$ absorbs all constant terms and thus changes from line to line.

\beqan
2 a \beta_3 \is - \dot{\gamma}_1 + 2 a^4 H \left( \left( \frac{1}{6} - \xi \right) R + m^2 \right)\\
\is - 2 a^4 H \left(\left(\frac{1}{6} - \xi \right) R + m^2 \right) - \frac{a^4}{2} \left(\frac{1}{6} - \xi \right) \dot{R} - 2A a^2 H\\
&& + 2 a^4 H \left( \left( \frac{1}{6} - \xi \right) R + m^2 \right)\\
\is - \frac{a^4}{2} \left(\frac{1}{6} - \xi \right) \dot{R} - 2 A a^2 H
\eeqan
and
\beqan
\alpha_3 \is a^{-4} \gamma_1 - \left(\frac{1}{6} - \xi \right) R - m^2\\
\is \frac{1}{2} \left(\left(\frac{1}{6} - \xi \right) R + m^2 \right) + A a^{-2} - \left(\frac{1}{6} - \xi \right) R - m^2\\
\is - \frac{1}{2} \left(\left(\frac{1}{6} - \xi \right) R + m^2 \right) + A a^{-2}
\eeqan
yield the other two coefficients without solving differential equations.
\end{proof}

Applying the same strategy as before we get the solutions for the second system of equations
\beqan
\alpha_5 \is \frac{a^2}{8} \Bigg\{ \left(\frac{1}{6} - \xi \right) \left(\ddot{R} + 5 H \dot{R} - \dot{H} R \right) - 3 \left(\frac{1}{6} - \xi \right) \xi R^2 \\
&&- \left(4 \dot{H} + 6 H^2 + 6 \xi R \right) m^2 - 3 m^4 \Bigg\}\\
\beta_5 \is \frac{a^3}{2} \dot{\alpha}_5\\
\gamma_3 \is - \frac{a^6}{8} \Bigg\{ \left(\frac{1}{6} - \xi \right) \left(\ddot{R} + H \dot{R} + \dot{H} R \right) - \left(\frac{1}{6} - \xi \right) \xi R^2\\
&& - 2 \left( H^2 + \xi R \right) m^2 - m^4 \Bigg\}
\eeqan
where we suppress the integration constant as it vanishes for a pure state due an argument given below.

\begin{Satz}
\label{positivityThm}
The singular part of the Hadamard modes is equal to the singularities of the two-point function of a pure state if the integration constant $A$ in Lemma \ref{4.1lem} vanishes. That is,
\begin{itemize}
\item (i) $\forall k : \; \mathfrak{H}_{\phi\phi,k} \mathfrak{H}_{\pi\pi,k} - \mathfrak{H}_{(\phi\pi),k}^2 = \frac{1}{4} + \mathcal{O}\left( k^{-6} \right)$
\item (ii) $\forall k : \; \mathfrak{H}_{\phi\phi,k} + \mathfrak{H}_{\pi\pi,k} > 0 + \mathcal{O}\left( k^{-3} \right)$
\end{itemize}
holds if and only if $A=0$.
\end{Satz}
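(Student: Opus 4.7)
The plan is to substitute the ansatz expansions for $\mathfrak{H}_{\phi\phi,k}$, $\mathfrak{H}_{(\phi\pi),k}$, $\mathfrak{H}_{\pi\pi,k}$ into the two conditions and expand in powers of $1/k$. The constant $A$ of Lemma~\ref{4.1lem} enters $\alpha_3$, $\beta_3$, $\gamma_1$ additively as $Aa^{-2}$, $-AHa$, $Aa^2$ respectively, so its influence is isolated in exactly one asymptotic coefficient. This is the feature that makes the ``only if'' direction essentially automatic.

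Concretely, multiplying out the expansions and cancelling the common $a^2H^2/(4k^2)$ coming from $\mathfrak{H}_{\phi\phi,k}\mathfrak{H}_{\pi\pi,k}$ on one side and from $\mathfrak{H}_{(\phi\pi),k}^2$ on the other, one obtains
\beqan
\mathfrak{H}_{\phi\phi,k}\mathfrak{H}_{\pi\pi,k}-\mathfrak{H}_{(\phi\pi),k}^2 &=& \frac{1}{4}\;+\;\frac{a^{-2}\gamma_1+a^2\alpha_3}{4k^2}\\
&&+\;\frac{a^{-2}\gamma_3+\alpha_3(a^4H^2+\gamma_1)+a^2\alpha_5+2aH\beta_3}{4k^4}\;+\;\mathcal{O}(k^{-6}).
\eeqan
Inserting Lemma~\ref{4.1lem} gives $a^{-2}\gamma_1+a^2\alpha_3 = 2A$, so the $k^{-2}$ coefficient equals $A/2$. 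Condition (i), which demands this coefficient to vanish, therefore forces $A=0$; this already yields the ``only if'' part of the statement.

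For the converse one fixes $A=0$ in Lemma~\ref{4.1lem} and inserts the explicit formulas for $\alpha_5,\beta_5,\gamma_3$ stated after the lemma into the $k^{-4}$ coefficient above. A direct algebraic computation, grouping contributions according to $(\tfrac{1}{6}-\xi)R+m^2$, $\dot R$ and $\dot H$, shows that this $k^{-4}$ piece also vanishes -- which is precisely why the second integration constant could be suppressed in the preceding text. Thus (i) holds to the claimed order. For (ii), the expansion
\[
\mathfrak{H}_{\phi\phi,k}+\mathfrak{H}_{\pi\pi,k}=\frac{a^2k}{2}+\frac{a^{-2}+a^4H^2+\gamma_1}{2k}+\mathcal{O}(k^{-3})
\]
has strictly positive leading $k$-coefficient $a^2/2$, so positivity modulo $\mathcal{O}(k^{-3})$ is immediate. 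Combining both directions gives the asserted equivalence.

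The main obstacle is the $k^{-4}$ identity under $A=0$: the explicit expressions for $\alpha_5,\gamma_3$ are lengthy and the cancellation intertwines contributions from $\alpha_3\gamma_1$, $\alpha_3 a^4H^2$, $2aH\beta_3$ with the derivative-carrying pieces of $\alpha_5$ and $\gamma_3$. The pivotal $k^{-2}$ identity, by contrast, is a one-line consequence of the explicit $A$-dependence in Lemma~\ref{4.1lem} and constitutes the clean heart of the proof.
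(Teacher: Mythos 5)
Your treatment of claim (i) follows essentially the paper's own proof: the same order-by-order expansion of $\mathfrak{H}_{\phi\phi,k}\mathfrak{H}_{\pi\pi,k}-\mathfrak{H}_{(\phi\pi),k}^2$, the same cancellation of the $a^2H^2/(4k^2)$ terms, and the same observation that Lemma \ref{4.1lem} turns the $k^{-2}$ coefficient into a multiple of $A$, so that this order of the identity holds if and only if $A=0$. The $k^{-4}$ identity is likewise only asserted (not displayed) in the paper, so leaving that computation implicit puts you on par with the original.

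The genuine gap is in claim (ii). You argue that the leading coefficient $a^2k/2$ is positive and conclude positivity ``modulo $\mathcal{O}(k^{-3})$''. But the statement is quantified over all $k$, and the subleading term $\frac{a^{-2}+a^4H^2+\gamma_1}{2k}$ is of order $k^{-1}$, not $k^{-3}$, so it is not absorbed by the error term; its coefficient contains $\gamma_1=\frac{a^4}{2}\left(\left(\frac{1}{6}-\xi\right)R+m^2\right)$, which has no definite sign, so for small $k$ the truncated sum can become negative and the leading-term argument breaks down there. The paper instead uses Wald's deformation argument: deform $a$ to a constant at early times, where $\mathfrak{H}$ reduces to the Minkowski two-point function and positivity is manifest; then propagate with (\ref{modeMotion}) and invoke claim (i), which keeps $\mathfrak{H}_{\phi\phi,k}\mathfrak{H}_{\pi\pi,k}=\frac{1}{4}+\mathfrak{H}_{(\phi\pi),k}^2>0$ up to higher order, to conclude that neither factor can pass through zero, hence both remain positive and so does their sum, for all $k$ and all times. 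Some dynamical input of this kind is needed; positivity of the leading symbol alone does not suffice to establish (ii) as stated.
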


\begin{proof}
The first claim is equivalent to
\beqan
a^2 H^2 + a^{-2} \gamma_1 + a^2 \alpha_3 - a^2 H^2 \is 0 \label{null1}\\
a^{-2} \gamma_3 + a^2 \alpha_5 + a^4 H^2 \alpha_3 + \alpha_3 \gamma_1 + 2 aH \beta_3 \is 0 \label{null2}
\eeqan
where equation (\ref{null1}) consists of the terms of order $k^{-2}$ and equation (\ref{null2}) represents the order $k^{-4}$. Equation (\ref{null1}) gives $2A = 0$. Analogously equation (\ref{null2}) forces the other integration constant to vanish.

To show that the second claim is satisfied we use the fact that $\mathfrak{H}$ is a solution of the Klein-Gordon equation (\ref{modeMotion}) up to $\mathcal{O}\left( k^{-5} \right)$. Then we use a form of the well known deformation argument of Wald, which is explained in \cite{WalB}. We deform the spacetime for early times such that $a$ is constant at early times. Thus the claim holds exactly at early times, as we just get the Minkowski two-point function. Then we propagate $\mathfrak{H}$ with Equation (\ref{modeMotion}) to some later time, where the spacetime is undeformed. Thus the claim is still satisfied at that time because the first claim shown above ensures that $\mathfrak{H}_{\phi\phi,k} \mathfrak{H}_{\pi\pi,k} > 0 + \mathcal{O}\left( k^{-6} \right)$. Finally, we undeform the spacetime at early times and can thus show the claim to hold for all times using Equation (\ref{modeMotion}).
\end{proof}

From Theorem \ref{positivityThm} we obtain a (tentative) parameterization of initial conditions of the field degrees of freedom for the equation of motion for pure states. For $k>1$ we set $G_{\phi\phi,k}=\mathfrak{H}_{\phi\phi,k}+a(k)$, $G_{\phi\pi,k}=\mathfrak{H}_{(\phi\pi),k}+b(k)$ and $G_{\pi\pi,k}=(\frac{1}{4}+G_{(\phi\pi),k}^2)/G_{\phi\phi,k}$ where $a(k)k^{7}$ and $b(k)k^{7}$ are $C^{\infty}_b((1,\infty))$. Furthermore, $a(k)$ needs to be choosen such that $G_{\phi\phi,k}>0$ for all such $k$. This can always be achieved as the leading term $a^{-2}/k$ is positive. For $k\leq 1$, these functions need to be continued to $C_b^\infty([0,\infty))$ such that the relations $G_{\phi\phi,k}+G_{\pi\pi,k}>0$ and  (\ref{sCCR}) are preserved. Obviously with such initial conditions the trace of the initial energy momentum tensor is finite. In particular, Hadamard states fall into this class.

From the construction of $\mathfrak{H}_{\phi\phi,k}$, $\mathfrak{H}_{(\phi\pi),k}$ and $\mathfrak{H}_{\pi\pi,k}$ it follows that these quantities fulfill  (\ref{modeMotion}) up to order $k^{-7}$ and $k^{-5}$, respectively. Thus, if the differece of $G_{\phi\phi,k}$ and $\mathfrak{H}_{\phi\phi,k}$ (multiplied by $k^2$) is integrable for large $k$, this property will prevail after an infinitisimal time step. The argument for the difference of $G_{\pi\pi,k}$ and $\mathfrak{H}_{\pi\pi,k}$ is analoguous.   

For non pure states, a positive $c(k)$ can be added to $G_{\pi\pi,k}$ such that $k^5c(k)\in C^b([0,\infty)]$.

\section{Equation of motion}
We begin with the easier case where the field is conformally coupled to the mean curvature, i.e. $\xi=1/6$. In this case, (\ref{motion2}) simplifies considerably, as it only contains the field dependent term $\langle \phi\phi\rangle_{\omega,\lambda,\xi}$. This term, up to zero mode contributions, is 
\beqan
\frac{1}{8\pi^3}\int_{\mathbb{R}^3} \left[G_{\phi\phi,k}-\frac{a^{-2}}{2k}+\frac{m ^2}{4k^3}\right] dk,
\eeqan 
since $\alpha_3=-\frac{m^2}{2}$. The renormalization prescription of $\int_{\mathbb{R}^3} k^{-3} dk$ at $0$ is given in Appendix \ref{Aapp}.

The zero mode terms in the conformally coupled case are 
\beq
\frac{1}{72}\,R-\frac{1}{4\pi^2}v_0\log\left(\frac{a^2}{\lambda^2}\right),
\eeq
where the first term stems from the first equation in Theorem \ref{3.1theo} and the second one from (\ref{logTermExpansion}). In the conformally coupled case, $v_0(x,x)$ takes the simple form 
\beq
v_0=-\frac{m^2}{2}.
\eeq
Wrapping up, we obtain 

\begin{Satz}
\label{5.1theo}
For the case of conformal coupling, the equations of motion (\ref{motion2}) is
\beqan
\label{motion3}
-R&=&8\pi G\left\{ \frac{m^2}{8\pi^3}\int_{\mathbb{R}^3} \left[G_{\phi\phi,k}-\frac{a^{-2}}{2k}+\frac{m ^2}{4k^3}\right]dk+m^2\left(\frac{1}{72}+c'\right)R  \right.\nonumber\\
&&+\frac{1}{240\,\pi^2}\left(\dot HH^2+H^4\right) +\left(\frac{1}{2880\,\pi^2}+c''\right)\Box R+\frac{m^4}{4\pi^2} \log\left(\frac{a^2}{\lambda^2}\right)\nonumber\\
&&\left.-m^4\left(\frac{1}{32 \,\pi^2}-c\right)\right\}
\eeqan
where the $k^{-3}$ integral is regularized at zero as explained in Appendix \ref{Aapp}.
\end{Satz}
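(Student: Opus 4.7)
The plan is to specialize the general equation of motion (\ref{motion2}) to the conformally coupled case $\xi = 1/6$ and then translate the point-split $\langle\phi^2\rangle_{\omega,\lambda,\xi}$ into the mode integral plus zero-mode contributions identified in Sections 3 and 4.

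First, I would substitute $\xi = 1/6$ directly in (\ref{motion2}). The prefactor $(6\xi - 1)$ annihilates the $\langle\dot\phi^2\rangle$ and $a^{-2}\langle\phi\Delta\phi\rangle$ terms; $(1-6\xi)\xi R$ disappears from the coefficient of $\langle\phi^2\rangle$ which reduces to $m^2$; and the combination $(36\xi - 5)/(4\pi^2)$ becomes $1/(4\pi^2)$. Inserting $\xi = 1/6$ into (\ref{U2}) gives $v_1 = \tfrac{1}{60}(\dot H H^2 + H^4) + \tfrac{1}{720}\Box R - m^4/8$, so that $v_1/(4\pi^2)$ accounts for the $(\dot H H^2 + H^4)$, $\Box R$ and part of the $m^4$ terms with the prefactors claimed in (\ref{motion3}); the remaining parts of the $\Box R$ and $m^4$ coefficients (namely $c''$ and $c$) are the renormalization freedoms already present in (\ref{motion2}).

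The core step is the mode expansion of $\langle\phi^2\rangle_{\omega,\lambda,\xi}$. By the point-splitting definition this quantity equals $\lim_{\vec z \to 0}[\langle\phi(t,\vec x)\phi(t,\vec y)\rangle_\omega - \mathfrak{H}_n(x,y)]|_{z^0=0}$. Inverse Fourier transforming in $\vec z$ splits the result into a mode integral $(2\pi)^{-3}\int d^3k[G_{\phi\phi,k} - \mathfrak{H}_{\phi\phi,k}]$ plus the genuinely zero-mode pieces of $\mathfrak{H}_n$ that are not encoded in $\mathfrak{H}_{\phi\phi,k}$. Combining the ansatz for $\mathfrak{H}_{\phi\phi,k}$ with Lemma \ref{4.1lem} and Theorem \ref{positivityThm} specialised at $\xi = 1/6$ forces $A = 0$ and $\alpha_3 = -m^2/2$, so the UV subtraction becomes $a^{-2}/(2k) - m^2/(4k^3) + O(k^{-5})$, producing precisely the integrand appearing in (\ref{motion3}). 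The residual $1/k^3$ infrared non-integrability in three dimensions is handled by the analytic regularization at zero from Appendix \ref{Aapp}. The remaining zero-mode contributions to $\langle\phi^2\rangle$ are the finite polynomial remainder $R/72$ extracted from Theorem \ref{3.1theo} and the logarithmic term $-v_0\log(a^2/\lambda^2)/(4\pi^2)$ from (\ref{logTermExpansion}), evaluated using $v_0(v,v) = -m^2/2$ at conformal coupling.

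Multiplying the resulting $\langle\phi^2\rangle$ by $m^2$, adding the piece $v_1/(4\pi^2)$ computed above, and collecting the renormalization freedoms $cm^4 + c'm^2R + c''\Box R$ from (\ref{motion2}) reassembles exactly the right-hand side of (\ref{motion3}). The main obstacle is keeping the bookkeeping honest: one must verify that every singular order of $\mathfrak{H}_n$ at equal time is either absorbed into the mode subtractions or identified as a genuine zero-mode contribution, and that the infrared regularization of the Fourier transform of $1/k^3$ is compatible with the log-type finite contributions from $v_0$ so that the total expression is unambiguous.
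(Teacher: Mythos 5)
Your proposal follows essentially the same route as the paper: specialize (\ref{motion2}) to $\xi=1/6$ so that only the $m^2\langle\phi^2\rangle_{\omega,\lambda,\xi}$ field term survives, rewrite $\langle\phi^2\rangle_{\omega,\lambda,\xi}$ as the mode integral with subtraction $\frac{a^{-2}}{2k}+\frac{\alpha_3}{2k^3}$ using $\alpha_3=-\frac{m^2}{2}$ and $A=0$ from Lemma \ref{4.1lem} and Theorem \ref{positivityThm}, collect the zero-mode contributions $\frac{1}{72}R$ from Theorem \ref{3.1theo} and the $v_0\log(a^2/\lambda^2)$ term from (\ref{logTermExpansion}) with $v_0=-\frac{m^2}{2}$, and insert $v_1$ at conformal coupling from (\ref{U2}). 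This matches the paper's derivation step for step, including the handling of the infrared $k^{-3}$ regularization via Appendix \ref{Aapp}.
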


Wald's fifth axiom \cite{Wal77,Wal78} states that the equation of motion should be a second order equation. (\ref{motion3}) is second order if and only if the renormalization constants $A,B,C$ and $D$ in (\ref{reno}) are chosen such that $c''=-1/(2880\pi^2)$. Note however, that for this value of $c''$ the joint system differential equations (\ref{modeMotion}) and (\ref{motion3}) is an implicit (infinite dimensional) system of differential equations and for $c''\not=-1/(2880\pi^2)$ it is explicit, which is of advantage for proving existence and numerical solutions. 

The $\log(a)$ term that seems to be missing in \cite{Pina} is due to the fact that the infra red prescription we employ for $\int_{\mathbb{R}^3}\frac{f(k)}{k^3}dk$ (cf. Appendix A) is not $a$ dependent, whereas the prescription used in \cite{Pina} is, leading to the absence of a stand alone $\log(a)$ term in the latter case.

For $\xi\not=1/6$ the homogeneous terms of the Hadamard parametrix take a much more complicated form, as second derivatives of the parametrix must be taken into account. Ignoring a prefactor $\frac{8\pi G}{4\pi^2}$ we have to calculate zero mode terms of 
\begin{align}
\mathfrak{C}_{\xi,m}=&(6\xi-1) \left(\partial_{x_0} \partial_{y_0} + a^{-2} \triangle \right) \left(- 2 \frac{u}{2\sigma} - v_0 \log{(2\sigma)} - \frac{1}{2} v_1 2\sigma \log{(2\sigma)} \right) \nonumber\\
&- \frac{1}{12} \left[(2-6\xi)m^2 + (6\xi-1)\xi R \right] (\dot{H} + 2 H^2) \nonumber\\
&+ \log{\left(\frac{a^2}{\lambda^2}\right)} \Bigg( (6\xi-1) \left(\partial_{x_0} \partial_{y_0} + a^{-2} \triangle \right) \left( - v_0 - \frac{1}{2} v_1 2\sigma \right) \nonumber\\
&- \left[(2-6\xi)m^2 + (6\xi-1)\xi R \right] v_0 \Bigg)
\end{align}
in the coincidence limit $x\to x'$. This can be done using the expansions of $u$, $v_0$, $v_1$ and $\sigma^2$ given in Section 3, applying the differential operator, setting $z^0=0$ and extracting terms of $\vec z$-order zero. With a computer aided calculation this yields the result
\begin{align}
\mathfrak{C}_{\xi,m}=& - \frac{1}{30} \left( 4 \tdot{H} + 53 \ddot{H}H + 11 \dot {H}^2 + 141 \dot{H}H^2 + 3 H^4 \right) \nonumber\\
&+\frac{1}{5} \xi \left( 9 \tdot{H} + 113 \ddot{H}H - 44 \dot{H}^2 + 11 \dot{H}H^2 - 277 H^4 \right) \nonumber\\
&- 6 \xi^2 \left( \tdot{H} + 12 \ddot{H}H - 14 \dot{H}^2 - 38 \dot{H}H^2 - 68 H^4 \right) - 108 \xi^3 \left(\dot{H} + 2 H^2 \right)^2 \nonumber\\
&+ \frac{m^2}{6} \left( \left( 21 \dot{H} + 44 H^2 \right) - 6\xi \left( 3 \dot{H} + 8 H^2 \right) - 36 \xi^2 \left( \dot{H} + 2 H^2 \right) \right) \nonumber\\
&+ \frac{m^4}{2} \left( 1 - 6 \xi \right) \nonumber\\
&+ \log{\left(\frac{a^2}{\lambda^2}\right)} \Bigg( - \frac{1}{60} \left( 3 \tdot{H} + 21 \ddot{H}H + 42 \dot {H}^2 + 152 \dot{H}H^2 + 116 H^4 \right) \nonumber\\
&+ \frac{1}{5} \xi \left( 4 \tdot{H} + 28 \ddot{H}H +31 \dot{H}^2 + 106 \dot{H}H^2 + 58 H^4 \right) \nonumber\\
&- \xi^2 \left( 3 \tdot{H} + 21 \ddot{H}H - 6 \dot{H}^2 - 36 \dot{H}H^2 - 72 H^4 \right) - 108 \xi^3 \left(\dot{H} + 2 H^2 \right)^2 \nonumber\\
&+ \frac{3}{2} m^2 \left( 1 - 6 \xi \right) \left( \dot{H} + 2 H^2 \right) - m^4 \left( 1 - 3 \xi \right) \Bigg)
\end{align}

Although this result looks very tedious, one can simplify the result subsuming terms proportional to $\Box R$, and $m^2R$ into the renormalization degrees of freedom and the $H$ independent term into a redefinition of the scale parameter $\lambda$.

The simplified form is
\begin{align}
\mathfrak{C}_{\xi,m}=& - \frac{1}{30} \left( -\frac{2}{3} \square R - \frac{25}{6} \dot{R} - \frac{5}{36} R^2 + 13 \dot{H}H^2 + 23 H^4 \right) \nonumber\\
&+\frac{1}{5} \xi \left( -\frac{3}{2} \square R - \frac{25}{3} \dot{R} - \frac{20}{9} R^2 + 23 \dot{H}H^2 + 43 H^4 \right) \nonumber\\
&- 6 \xi^2 \left( -\frac{1}{6} \square R - \frac{5}{6} \dot{R} - \frac{1}{2} R^2 + 2 \dot{H}H^2 + 4 H^4 \right) - 3 \xi^3 R^2 \nonumber\\
&+ \frac{m^2}{6} \left( \left( - \frac{7}{2} R + 2 H^2 \right) - 6\xi \left( - \frac{1}{2} R + 2 H^2 \right) + 6 \xi^2 R \right) + \frac{m^4}{2} \left( 1 - 6 \xi \right) \nonumber\\
&+ \log{\left(\frac{a^2}{\lambda^2}\right)} \Bigg( - \frac{1}{60} \left( -\frac{1}{2} \square R + \frac{5}{6} R^2 - 4 \dot{H}H^2 - 4 H^4 \right) \nonumber\\
&+ \frac{1}{5} \xi \left( -\frac{2}{3} \square R + \frac{5}{12} R^2 - 2 \dot{H}H^2 - 2 H^4 \right) \nonumber\\
&- \xi^2 \left( -\frac{1}{2} \square R - \frac{1}{2} R^2 \right) - 3 \xi^3 R^2 + \frac{1}{4} m^2 \left( 1 - 6 \xi \right) R - m^4 \left( 1 - 3 \xi \right) \Bigg)
\end{align}

Plugging in the calculated homogeneous term we obtain the equation of motion for the general case.

\begin{Satz}
The equation of motion in the general case is given by the following expression
\beqan
\label{motion4}
-R&=&8\pi G\left\{\left(6\xi-1\right)\left(\frac{a^{-6}}{8\pi^3}\int_{\mathbb{R}^3} \left[G_{\pi\pi,k}-\frac{a^2k}{2}-\frac{a^4H^2+\gamma_1}{2k}-\frac{\gamma_3}{2k^3}\right] dk\right.\right.\nonumber\\
&&\left.+\frac{a^{-2}}{8\pi^3}\int_{\mathbb{R}^3}\left[k^2 G_{\phi\phi,k}-
\frac{a^{-2}k}{2}-\frac{\alpha_3}{2k}-\frac{\alpha_5}{2k^3}\right]dk\right)\nonumber \\
&&+\frac{\xi}{8\pi^3} \left[6m^2+(6\xi-1)R\right]\int_{\mathbb{R}^3}\left[G_{\phi\phi,k}-\frac{a^{-2}}{2k}-\frac{\alpha_3}{2k^3}\right]dk\nonumber \\
&&\left. - \frac{1}{4\pi^2} \mathfrak{C}_{\xi,m} + \frac{36\xi-5}{4\pi^2}v_1 +cm^4+c'm^2R+c''\Box R\right\}
\eeqan
The regularization of the integrals over $k ^{-3}$ at zero is given in Appendix \ref{Aapp}.   
\end{Satz}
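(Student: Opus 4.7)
The plan is to start from Theorem \ref{2.1thm} and to convert each of the three state-dependent averages $\langle\dot\phi^2\rangle_{\omega,\lambda,\xi}$, $a^{-2}\langle\phi\Delta\phi\rangle_{\omega,\lambda,\xi}$ and $\langle\phi^2\rangle_{\omega,\lambda,\xi}$ appearing in (\ref{motion2}) into spatial-mode integrals of $G_{\pi\pi,k}$ and $G_{\phi\phi,k}$, with Hadamard-parametrix modes subtracted to just the order needed for UV convergence. By definition each average is a coincidence limit of $\omega(\phi\phi)-\mathfrak{H}_n$, and with our Fourier convention this limit is the inverse Fourier integral of $G_{\phi\phi,k}$ (or its derivatives) minus the corresponding mode expansion of $\mathfrak{H}_n$ from Section~4, evaluated at $\vec z=0$. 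The genuinely finite position-space remainder is then collected into the function $\mathfrak{C}_{\xi,m}$ introduced before the theorem.

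For $\langle\phi^2\rangle$ it suffices to subtract $a^{-2}/(2k)+\alpha_3/(2k^3)$, since the resulting integrand is $\mathcal{O}(k^{-5})$ by Lemma~\ref{4.1lem} and Theorem~\ref{positivityThm} (with $A=0$). For $\langle\dot\phi^2\rangle=a^{-6}\langle\pi^2\rangle$ one works with $G_{\pi\pi,k}-\mathfrak{H}_{\pi\pi,k}$, subtracting through the $\gamma_3/(2k^3)$ order. For $a^{-2}\langle\phi\Delta\phi\rangle$, the spatial Laplacian turns into a factor of $k^2$ in Fourier space, so one additional singular order of $\mathfrak{H}_{\phi\phi,k}$ must be absorbed, namely $\alpha_5/(2k^5)$, yielding the integrand $k^2G_{\phi\phi,k}-a^{-2}k/2-\alpha_3/(2k)-\alpha_5/(2k^3)$. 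Together with the IR prescription of Appendix~A for $\int k^{-3}dk$, each of these integrals is then finite. The terms $(36\xi-5)v_1/(4\pi^2)$ and the renormalization counterterms $cm^4+c'm^2R+c''\Box R$ pass through from Theorem~\ref{2.1thm} unchanged.

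What remains is to account for the non-mode-subtracted part of $\mathfrak{H}_n(x,y)|_{z^0=0}$ and of its $\partial_{x^0}\partial_{y^0}$ and $\Delta$ derivatives in the coincidence limit $\vec z\to0$. These contributions arise from the finite piece $-\tfrac{1}{12}(\dot H+2H^2)$ of Theorem~\ref{3.1theo}, from the $\log(a^2/\lambda^2)$ term in (\ref{logTermExpansion}), and from the coincidence limits of $v_0$ and $v_1$ in (\ref{v0result})--(\ref{v1result}) after applying $(6\xi-1)(\partial_{x^0}\partial_{y^0}+a^{-2}\Delta)$ to the full parametrix. Collecting all of them with the prefactor inherited from (\ref{motion2}), and adjoining the $\langle\phi^2\rangle$-coefficient contribution $-\tfrac{1}{12}[(2-6\xi)m^2+(6\xi-1)\xi R](\dot H+2H^2)$, reproduces the displayed expression for $\mathfrak{C}_{\xi,m}$. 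A subsequent rearrangement absorbs redundant $\Box R$-, $m^2R$- and $H$-independent pieces into the renormalization degrees of freedom $c,c',c''$ and a redefinition of $\lambda$, after which substitution into (\ref{motion2}) yields (\ref{motion4}).

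The principal obstacle is the bookkeeping between mode-space subtractions and position-space remainders: each additional $\partial_{x^0}$, $\partial_{y^0}$ or $\Delta$ lowers the order of a $\log(\vec z^2)$-singularity and activates new zero-mode contributions involving derivatives of $v_0$ and the coincidence limit of $v_1$. Care is required to ensure that no term is double-counted or silently omitted when the finite position-space pieces are assembled into $\mathfrak{C}_{\xi,m}$; the computer-algebra calculation summarised before the theorem is what makes this bookkeeping tractable.
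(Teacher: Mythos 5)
Your proposal follows essentially the same route as the paper: the paper gives no separate proof for this theorem but derives it by exactly the procedure you describe, namely substituting the mode expansions of $\langle\phi^2\rangle_{\omega,\lambda,\xi}$, $\langle\dot\phi^2\rangle_{\omega,\lambda,\xi}=a^{-6}\langle\pi^2\rangle_{\omega,\lambda,\xi}$ and $a^{-2}\langle\phi\Delta\phi\rangle_{\omega,\lambda,\xi}$ into Theorem \ref{2.1thm}, subtracting the Hadamard modes $\mathfrak{H}_{\phi\phi,k}$, $\mathfrak{H}_{\pi\pi,k}$ to precisely the orders needed for UV convergence, collecting the zero-mode remainders into $\mathfrak{C}_{\xi,m}$, and invoking the Appendix \ref{Aapp} prescription for the infrared $k^{-3}$ integrals. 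The bookkeeping caveat you raise at the end is exactly the part the paper delegates to computer algebra, so your account is an accurate reconstruction of the intended argument.
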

We would like to point out that due to the 4th time derivatives in $\gamma_3$ and $\alpha_5$, (\ref{motion4}) again leads to an implicit system of differential equations, as the leading order time derivative can not be isolated, regardless of the renormalization degrees of freedom and the specific form of $\mathfrak{C}_{\xi,m}(t)$. Also, Wald's fifth axiom can never be fulfilled in this case.
\appendix

\section{Regularization prescription on $k^{-3}$}
\label{Aapp}
Here we give the details of the regularization of the seemingly infra red divergent integrals 
\begin{equation}
\int_{\mathbb{R}^3} \left[G_{\phi\phi,k}-\frac{a^{-2}}{2k}+\frac{m ^2}{4k^3}\right] dk
\end{equation}
and related. For any Schwartz test function $\chi$ with $\chi(0)=1$ This integral can be re-written as
\begin{equation}
\int_{\mathbb{R}^3} \left[G_{\phi\phi,k}-\frac{a^{-2}}{2k}+\frac{m ^2}{4k^3}\right](1-\chi(k)) + \left[G_{\phi\phi,k}-\frac{a^{-2}}{2k}\right]\chi(k)\, dk+\int_{\mathbb{R}^3} \frac{m ^2}{4k^3}\chi(k) dk
\end{equation} 
Here the first integral is regular. The second -- seemingly infra red divergent -- integral needs to be properly traced back to its definition as distributional Fourier transform of $\log(r)$ smeared with the test function $\chi$, which gives a regular integral after shifting the Fourier transform to the test function.  

Let us quickly calculate the Fourier transform in the sense of tempered distributions of the locally integrable function $\log(r)$, $r=|\vec x|$ for the convenience of the reader. Let $\check \varphi(\vec x)=\varphi(-\vec x)$, then
\beqan
\label{fourierLog}
\left<\mathscr{F}(\log(r))(k),\varphi\right>&=& \left< \log(r),\mathscr{F}(\check \varphi)\right>=\lim_{\zeta\nearrow 0}\frac{d}{d\zeta} \left<r^\zeta,\mathscr{F}(\check \varphi)\right>\nonumber\\
&=&\lim_{\zeta\nearrow 0}\frac{d}{d\zeta}\left[2^{\zeta+3}\pi^{3/2} \frac{\Gamma\left(\frac{\zeta+3}{2}\right)}{\Gamma\left(-\frac{\zeta}{2}\right)}\left<k^{-\zeta-3},\varphi\right>\right]
\eeqan 
Since $\zeta$ approaches $0$ from below, $k^{-\zeta-3}$ is locally integrable. Furthermore, $\left<k^{-\zeta-3},\varphi\right>$ can be analytically continued in $\zeta$ to $\mathbb{C}\setminus\mathbb{N}_0$ \cite{Con}. The Laurent series of $k^{-\zeta-3}$ in $\zeta=0$ has a pole proportional to $\delta_0$. Hence, for $\varphi(0)=0$, $\left<k^{-\zeta-3},\varphi\right>$ is analytic at $\zeta=0$ and in particular has a finite $\zeta$ derivative. So let us from now on assume $\varphi(0)=0$ and perform the $\zeta$ derivative at the right hand side of (\ref{fourierLog}). Due to the pole of $\Gamma\left(-\frac{\zeta}{2}\right)$ at $\zeta=0$, the only term that contributes in the limit $\zeta\nearrow 0$ is
\beq
\lim_{\zeta\nearrow 0} 2^{\zeta+3-1}\pi^{3/2} \frac{\Gamma\left(\frac{\zeta+3}{2}\right)\Gamma'\left(-\frac{\zeta}{2}\right)}{\Gamma\left(-\frac{\zeta}{2}\right)^2}\left<k^{-\zeta-3},\varphi\right>=-4\pi^{3/2}\Gamma\left(\frac{3}{2}\right) \left< k^{-3},\varphi\right>,
\eeq  
where we used the Laurent series expansion of $\Gamma$ at zero. Since, by dominated convergence, $\left< k^{-3},\varphi\right>=\int_{\mathbb{R}^3}k^{-3}\varphi(\vec x) d\vec x$, we see that $\mathscr{F}(\log(r))(k)$ is a regularization of $-4\pi^{3/2}\Gamma\left(\frac{3}{2}\right)k^{-3}$. Let now $\varphi(0)\not=0$, and $\chi(k)=e^{-k^2}$. Since $\varphi(\vec x)=[\varphi(\vec x)-\varphi(0)e^{-k^2/2}]+\varphi(0)e^{-k^2/2}=\varphi_1(\vec x)+\varphi(0)e^{-k^2/2}$. We have already shown how to evaluate $\mathscr{F}(\log(r))(k)$ on $\varphi_1$. It remains to calculate
\beq
\left<\mathscr{F}(\log(r)),\varphi(0)e^{-k^2/2}\right>=\frac{\varphi(0)}{(2\pi)^{3/2}}\left<\log(r),e^{-k^2/2}\right>=\frac{\varphi(0)}{(2\pi)^{1/2}}\Gamma'\left(\frac{3}{2}\right).
\eeq

\vspace{1cm}

\noindent {\sc Benjamin Eltzner}\\
Max-Planck Institute for Mathematics in the Sciences\\
Inselstr. 22, D-04103 Leipzig\\
eltzner@mis.mpg.de\\

\vspace{.5cm}

\noindent {\sc Hanno Gottschalk}\\
Bergische Universit\"at Wuppertal, \\
Fachgruppe Mathematik, Gau\ss stra\ss e 20\\
D-42119 Wuppertal \\
gottschalk@math.uni-wuppertal.de

\end{document}